\newcommand{\cross}{\chi}
\newcommand{\draw}{\mathrm{draw}}
\newcommand{\drawfan}{\mathrm{draw_{fan}}}
\newcommand{\merge}{\circ}
\newtheorem{theorem}{Theorem}
\newtheorem{lemma}{Lemma}
\newtheorem{observation}{Observation}
\title{2-Layer Fan-Planarity in Polynomial Time}
\author{
Yasuaki Kobayashi\thanks{Hokkaido University. Email: \texttt{koba@ist.hokudai.ac.jp}}
\and 
Yuto Okada\thanks{Nagoya University. Email: \texttt{pv.20h.3324@s.thers.ac.jp}}
}
\date{}
\begin{document}

\maketitle

\begin{abstract}
    In this paper, we give a polynomial-time algorithm for deciding whether an input bipartite graph admits a 2-layer fan-planar drawing, resolving an open problem posed in several papers since 2015.
\end{abstract}

\section{Introduction}

A \emph{2-layer drawing} of a bipartite graph $G = (X \cup Y, E)$ is a drawing in the plane such that the vertices in $X$ and $Y$ are drawn on two horizontal lines, respectively, and each edge is drawn as a straight line segment between these two parallel lines.
This is one of the most natural drawing models for bipartite graphs and is also considered to be important in the famous Sugiyama's framework~\cite{SugiyamaTT81} of multi-layered graph drawing.
There are numerous studies regarding algorithmic aspects~\cite{AngeliniLBFP21,KobayashiO025,KobayashiT16} and graph-theoretic properties~\cite{AngeliniLFS24,Okada25,Wood23} for 2-layer drawings.

This paper focuses on a particular variant of 2-layer drawings, which is called a 2-layer \emph{fan-planar} drawing.
A 2-layer drawing is said to be \emph{fan-planar} if for every edge $e$, the crossing edges with $e$ have a common end vertex; these edges form a ``fan''.
A bipartite graph admitting a 2-layer fan-planar drawing is called a \emph{2-layer fan-planar graph}.

The concept of (general) fan-planar drawings was introduced by Kaufmann and Ueckerdt~\cite{KaufmannU22}, which is a generalization of 1-planar drawings and a specialization of 3-quasiplanar drawings~\cite{KaufmannU22}.
The problem of determining whether an input graph admits a fan-planar drawing is known to be NP-complete~\cite{BinucciGDMPST15}.
The authors in~\cite{BinucciCDGKKMT17} studied 2-layer fan-planar drawings of bipartite graphs.
They gave a characterization of bipartite graphs having 2-layer fan-planar drawings and a linear-time algorithm for testing 2-layer fan-planarity for biconnected bipartite graphs~\cite{BinucciCDGKKMT17}.
Later, by exploiting their characterization, a linear-time algorithm for trees is given~\cite{BiedlC0MNR20}.
It is not clear whether this characterization would lead to a polynomial-time recognition algorithm for 2-layer fan-planar graphs.
In fact, a polynomial-time recognition algorithm for 2-layer fan-planar graphs was explicitly mentioned as an open problem in several papers~\cite{BinucciGDMPST15,BinucciCDGKKMT17,DidimoLM19,Bekos020}.

In this paper, we resolve this open question affirmatively by giving a polynomial-time algorithm for recognizing 2-layer fan-planar graphs.

\begin{theorem}\label{thm:main}
    There is a polynomial-time algorithm that, given a bipartite graph $G$, decides whether $G$ admits a 2-layer fan-planar drawing.
\end{theorem}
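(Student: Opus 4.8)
The plan is to reduce the problem to a dynamic program over the block-cut tree of the input graph. First I would observe that distinct connected components can always be drawn side by side without introducing crossings between them, so $G$ is 2-layer fan-planar if and only if each of its connected components is; hence it suffices to treat a connected bipartite graph $G$. For connected $G$ I would compute its block-cut tree $\mathcal{T}$, root it at an arbitrary block, and process the blocks from the leaves toward the root. The biconnected blocks themselves can be tested using the linear-time machinery of \cite{BinucciCDGKKMT17}; the genuinely new work lies in combining the partial drawings of the pendant subgraphs that hang off each cut vertex.

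The central object is a bounded-size \emph{boundary profile} summarizing how a connected subgraph $H$, rooted at a cut vertex $c$, can be realized as a 2-layer fan-planar drawing when $c$ is to be identified with a vertex of the parent block. A 2-layer drawing is determined by the two linear orders of $X$ and $Y$, and two edges cross precisely when their endpoints appear in opposite orders on the two lines; I would first record that fan-planarity forbids any edge from being crossed from both sides at once, so each edge is crossed only by a single fan lying entirely to its left or entirely to its right. A profile for $H$ would then encode: on which side(s) of $c$ the remaining graph may be attached, the identity and orientation of the fan (if any) of which $c$ is the apex near the attachment point, and the limited constraints that edges of $H$ incident to the boundary impose on crossings reaching in from the parent. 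The key lemma I must establish is an \emph{exchange property}: any two realizations of $H$ with the same profile are interchangeable inside any larger drawing, so that only polynomially many profiles per subgraph need be tracked.

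Granting this, the dynamic program is straightforward. At a block $B$ with parent cut vertex $c_0$ and child cut vertices $c_1, \dots, c_k$, I would take the already computed sets of feasible profiles for the pendant subgraphs at $c_1, \dots, c_k$, run an adaptation of the biconnected test in which each pendant subgraph is replaced by a gadget realizing one of its profiles, and from the feasible combinations read off the set of profiles realizable at $c_0$. Each block is handled in time polynomial in its size and in the number of profiles, and correctness follows from the exchange property, since a drawing of $G$ is fan-planar if and only if every block together with its attached profiles is simultaneously realizable.

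The main obstacle will be proving the exchange property and, in particular, bounding the number of relevant profiles. The difficulty is that fan-planarity genuinely couples adjacent blocks: a fan centered at a cut vertex $c$ may spread into the region of a neighboring block and cross its edges, so the clean intuition from 2-layer \emph{planarity} --- that each subgraph occupies a contiguous interval on both lines and interacts with the rest only through the position of $c$ --- fails here. I therefore expect the heart of the argument to be a careful case analysis identifying exactly which boundary data (side availability, apex orientation, and the nesting pattern of the outermost edges near $c$) is necessary and sufficient to characterize feasibility, together with a proof that this data takes only polynomially many values, so that the block-cut tree dynamic program runs in polynomial time.
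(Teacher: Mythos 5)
Your proposal is a plan rather than a proof: everything hinges on the ``exchange property'' and on the claim that only polynomially many boundary profiles per pendant subgraph need to be tracked, and you explicitly defer both. That deferred step is precisely where the difficulty of this problem lives. As you yourself note, fan-planarity couples adjacent blocks --- a fan apexed at a cut vertex can reach arbitrarily far into a neighboring block, and pendant subgraphs need not occupy contiguous intervals on either line --- so there is no a priori reason the interaction across a cut vertex is summarized by bounded (or even polynomially enumerable) data. Moreover, a cut vertex may have many children in the block-cut tree, and their drawings can interleave on both layers; combining ``one profile per child'' inside the parent block is not a product construction, so even granting a polynomial profile set per child, the combination step at a block is not obviously polynomial. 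The paper itself remarks that it is unclear whether the characterization of Binucci et al.\ (the biconnected machinery you want to invoke) leads to a recognition algorithm, and indeed that route had been open since 2015; nothing in your sketch overcomes the obstruction.

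For comparison, the paper avoids the block-cut tree entirely. It first proves local degree-reduction rules (duplicate degree-1 neighbors and triplicate twins of degree 2 can be deleted) and structural lemmas showing that in any 2-layer fan-planar drawing a vertex has at most four neighbors of degree at least 3 and at most four ``twin classes'' of degree-2 neighbors, so a reduced yes-instance has maximum degree at most 13 (\cref{lem:degree-bound}). Since a fan has at most $d$ edges when the apex has degree at most $d$, every 2-layer fan-planar drawing of the reduced graph is 13-planar (\cref{obs:k-planar}), and the problem reduces to searching for a fan-planar drawing among 2-layer $13$-planar drawings via the sliding-window dynamic program of \cite{KobayashiO025}, augmented with a fan-planarity check (\cref{lem:rec}). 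If you want to salvage your approach, you would need to actually define the profiles, prove the exchange lemma, and bound both the number of profiles and the cost of the per-block combination step; as written, the argument does not establish the theorem.
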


Our algorithm employs a different idea from those used in existing algorithms~\cite{BiedlC0MNR20,BinucciCDGKKMT17}.
For an integer $k\ge 0$, a 2-layer drawing of $G$ is said to be \emph{$k$-planar} if every edge involves at most $k$ crossings in the drawing.
Our polynomial-time algorithm exploits a recent polynomial-time algorithm for recognizing 2-layer $k$-planar graphs with fixed~$k$ \cite{KobayashiO025}.
It is not hard to see that there is a 2-layer fan-planar drawing that is not $k$-planar for any $k$.
To apply this algorithm for our purpose, we give some degree reduction rules, which enables us to reduce our problem to that on bounded-degree graphs~(\cref{lem:degree-bound}).
Since every 2-layer fan-planar graph $G$ with maximum degree~$d$ has a 2-layer $d$-planar drawing that is fan-planar (\cref{obs:k-planar}), by making a slight modification to the algorithm of \cite{KobayashiO025}, we can determine whether $G$ admits a 2-layer fan-planar drawing in polynomial time as well.

\paragraph{Notation and terminology.}
Let $G = (X \cup Y, E)$ be a bipartite graph with two independent sets $X$ and $Y$.
For $U \subseteq X \cup Y$, the open (resp.~closed) neighborhood of $U$ is denoted by $N(U)$ (resp.~$N[U]$) and the set of edges between $U$ and $(X \cup Y) \setminus U$ is denoted by $\delta(U)$.
When $U$ is a singleton with $U = \{u\}$, we may simply write $N(u)$, $N[u]$, and $\delta(u)$ instead. 
The subgraph of $G$ induced by a vertex $U \subseteq X \cup Y$ is denoted by $G[U]$.
For a vertex $v$ and vertex set $U$ in $G$, we use $G - v$ and $G - U$ to denote $G[(X \cup Y) \setminus \{v\}]$ and $G[(X \cup Y) \setminus U]$, respectively.

A \emph{2-layer drawing} $\mathcal D$ of $G$ is a pair of linear orderings $(\sigma_X, \sigma_Y)$, where $\sigma_X$ and $\sigma_Y$ are defined on $X$ and $Y$, respectively.
We may interchangeably use a pair of binary relations on $\sigma_X$ and $\sigma_Y$ to represent a 2-layer drawing of $G$: For distinct $u, v \in X$ (resp.~distinct $u,v \in Y$), $u$ precedes $v$ on $\sigma_X$ (resp.~$\sigma_Y$) if and only if $u <_X v$ (resp.~$u <_Y v$). 
A \emph{crossing} in $\mathcal D$ is a pair of edges $\{x, y\}$ and $\{x', y'\}$ such that $x <_X x'$ and $y' <_Y y$.
For $x, x' \in X$ with $x <_X x'$, we say that $x$ is \emph{to the left of} $x'$ ($x'$ is \emph{to the right of} $x$) in $\mathcal D$.
We also say $y$ is to the left (resp.~right) of $y'$ for $y, y' \in Y$ with $y <_Y y'$ (resp.~$y' <_Y y$).
The set of crossing edges with an edge $e \in E$ is denoted by $\cross_{\mathcal D}(e)$.
We say that $\mathcal D$ is \emph{fan-planar} if for each edge in $G$, all edges in $\cross_{\mathcal D}(e)$ have a common end vertex.
Note that if $\mathcal D$ is not fan-planar, there are three edges $e \in E$ and $f, f' \in \cross_{\mathcal D}(e)$ that form a matching in $G$.
Such an unordered triple $\{e, f, f'\}$ is called a \emph{violating triple} in $\mathcal D$.

\section{Degree Reduction}
This section is devoted to showing two reduction rules for 2-layer fan-planar graphs.
These two reduction rules aim to reduce the maximum degree of a graph to a constant while preserving its 2-layer fan-planarity.

\begin{lemma}\label{lem:deg-1}
    Suppose that a vertex $v$ has degree-1 neighbors $u, w \in N(v)$.
    Then, $G$ has a 2-layer fan-planar drawing if and only if $G - w$ does.
\end{lemma}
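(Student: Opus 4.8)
The forward direction is immediate: given a 2-layer fan-planar drawing $\mathcal D$ of $G$, deleting $w$ together with its unique incident edge $vw$ yields a drawing of $G - w$ whose set of crossings is a subset of those in $\mathcal D$, so it contains no violating triple and is fan-planar. The content of the lemma is therefore the reverse direction, and the plan is to take a fan-planar drawing $\mathcal D$ of $G - w$ and reinsert $w$ without creating a violating triple.

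My construction is to place $w$ in $\sigma_Y$ (assuming without loss of generality that $v \in X$, so that $u, w \in Y$) immediately next to $u$ --- say directly to its left --- leaving the rest of the orderings $\sigma_X, \sigma_Y$ untouched, and to draw the new edge $vw$. The key structural fact to establish first is that, in the resulting drawing $\mathcal D'$, the edges $vu$ and $vw$ cross exactly the same set of edges. Indeed, since $u$ and $w$ both have degree $1$, the only edges incident to them are $vu$ and $vw$, which share $v$ and hence do not cross each other; and since $w$ sits immediately to the left of $u$ with no vertex in between, for every other edge $f = \{x, y\}$ we have $y <_Y u \iff y <_Y w$ and $u <_Y y \iff w <_Y y$, so $f$ crosses $vu$ if and only if it crosses $vw$. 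Thus $\cross_{\mathcal D'}(vw) = \cross_{\mathcal D'}(vu)$.

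With this in hand I would rule out violating triples by cases. Inserting $w$ preserves the relative order of all vertices of $G - w$, so crossings among edges other than $vw$ are identical to those in $\mathcal D$; hence any violating triple not using $vw$ would already be one in the fan-planar drawing $\mathcal D$, a contradiction. For a putative violating triple $\{vw, f, f'\}$, the edges $f, f' \in \cross_{\mathcal D'}(vw) = \cross_{\mathcal D'}(vu)$ must be pairwise disjoint from $vw$ and from each other. Because $u$ has degree $1$, neither $f$ nor $f'$ is incident to $u$, and by the matching condition neither is incident to $v$; so $\{vu, f, f'\}$ is also pairwise disjoint with $f, f'$ both crossing $vu$. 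That would be a violating triple of $\mathcal D$ --- again impossible. Hence $\mathcal D'$ is fan-planar.

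I expect the only real care to be in this last step: making the degree-$1$ hypotheses on \emph{both} $u$ and $w$ do their job, namely using them to transport a hypothetical violating triple through $vw$ back to one through the surviving edge $vu$. Everything else is bookkeeping about the insertion into the ordering, and the case $v \in Y$ is symmetric.
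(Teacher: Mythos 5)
Your proposal is correct and follows essentially the same route as the paper: insert $w$ immediately next to $u$ so that $\cross_{\mathcal D'}(vw) = \cross_{\mathcal D'}(vu)$, and then transport any hypothetical violating triple through $vw$ to one through $vu$ in the original fan-planar drawing. The paper's proof is terser but uses exactly this argument.
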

\begin{proof}
    It suffices to show $G$ has a 2-layer fan-planar drawing, assuming that $G - w$ does.
    Let $D$ be a 2-layer fan-planar drawing of $G - w$.
    Let $e_u$ and $e_w$ be the unique edges incident to $u$ and $w$, respectively.
    We extend the drawing $\mathcal D$ by inserting $w$ right next to $u$.
    We let $\mathcal D'$ be the obtained drawing of $G$.
    Since there are no other vertices between $u$ and $w$ in $\mathcal D$, every edge that crosses $e_w$ in $\mathcal D'$ also crosses $e_u$ in $\mathcal D$.
    Thus, $\{e_w, f, f'\}$ is a violating triple in $\mathcal D'$ only if $\{e_u, f, f'\}$ is a violating triple in $\mathcal D$.
    Hence, $\mathcal D'$ is fan-planar.
\end{proof}

\begin{lemma}\label{lem:deg-3}
    Suppose that a vertex $v$ has at least five neighbors of degree at least~3.
    Then, $G$ has no 2-layer fan-planar drawing.
\end{lemma}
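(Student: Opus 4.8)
The plan is to assume, for contradiction, that $G$ admits a 2-layer fan-planar drawing $\mathcal D = (\sigma_X, \sigma_Y)$, and to produce a violating triple. By the symmetry between the two layers I may assume $v \in X$. I fix five neighbors of $v$ of degree at least~$3$ and label them $y_1, \dots, y_5$ so that $y_1 <_Y y_2 <_Y y_3 <_Y y_4 <_Y y_5$. Since $G$ is bipartite and $\deg(y_i) \ge 3$, each $y_i$ has at least two neighbors other than $v$, all lying in $X \setminus \{v\}$, which I partition into $X_L = \{x \in X : x <_X v\}$ and $X_R = \{x \in X : v <_X x\}$. The whole argument is driven by two local facts about crossings with the edges $\{v, y_i\}$.

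First I would record the crux at the median edge $e = \{v, y_3\}$. Any edge with an endpoint in $X_R$ and its other endpoint below $y_3$ crosses $e$ ``from the lower right'', and any edge with an endpoint in $X_L$ and its other endpoint above $y_3$ crosses $e$ ``from the upper left''; such a pair, together with $e$, forms a matching and hence a violating triple. Thus fan-planarity forces one of these two families of edges to be empty. Second, I would isolate a \emph{pencil} phenomenon: if some $y_i$ has two distinct neighbors $p \ne q$ in $X_R$, then $\{p, y_i\}$ and $\{q, y_i\}$ both cross every edge $\{v, y_j\}$ with $y_i <_Y y_j$, and since these two edges meet only at $y_i$, fan-planarity forces \emph{every} edge of $\cross_{\mathcal D}(\{v,y_j\})$ to be incident to $y_i$; the mirror statement holds for two neighbors in $X_L$ and indices $y_j <_Y y_i$.

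With these in hand the argument propagates along the fan. Consider the case where the lower-right family at $e$ is empty (the other case is identical after reversing both $\sigma_X$ and $\sigma_Y$, which preserves fan-planarity and swaps $X_L \leftrightarrow X_R$ while reversing the order of the $y_i$, thereby interchanging the two families). Then $y_1$ and $y_2$ have all their extra neighbors in $X_L$, so the pencil fact applied to $y_2$ at the edge $\{v, y_1\}$ forces every edge crossing $\{v, y_1\}$ to pass through $y_2$; consequently none of $y_3, y_4, y_5$ may have a neighbor in $X_L$, and hence all of their extra neighbors lie in $X_R$. Now the pencil fact applied to $y_3$ at $\{v, y_5\}$ forces every edge crossing $\{v, y_5\}$ through $y_3$, yet $y_4$ still has a neighbor in $X_R$, whose edge crosses $\{v, y_5\}$ without touching $y_3$ --- the desired violating triple. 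I expect the main obstacle to be organizing this propagation so that exactly five neighbors suffice: the pencil fact needs a vertex with two same-side neighbors strictly between the edge it acts on and the witness, so one must check that the three ``interior'' neighbors $y_2, y_3, y_4$ are driven to opposite sides in the correct order. The crossing verifications themselves are routine once the partition $X_L, X_R$ and the two facts are set up.
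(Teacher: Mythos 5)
Your proof is correct, and every crossing verification in it checks out, but it is organized quite differently from the paper's. The paper gives a direct three-case argument: order the five neighbors as $u_1,\dots,u_5$, use the left--right reflection (together with the fact that $u_3$ is the median of five) to assume $u_3$ has a neighbor $w$ with $w <_X v$, pick any neighbor $w' \notin \{v,w\}$ of $u_2$ (this is where $\deg(u_2)\ge 3$ enters), and exhibit an explicit violating triple in each of the three cases $v <_X w'$, $w' <_X w$, and $w <_X w' <_X v$. You instead extract two reusable structural facts --- the dichotomy at the median edge $\{v,y_3\}$ (one of the two ``quadrant'' families must be empty) and the pencil lemma (two same-side extra neighbors of $y_i$ force the apex of the fan $\cross_{\mathcal D}(\{v,y_j\})$ to be $y_i$) --- and then propagate them: the pencil of $y_2$ over $\{v,y_1\}$ pushes all extra neighbors of $y_3,y_4,y_5$ into $X_R$, and the pencil of $y_3$ over $\{v,y_5\}$ then collides with an $X_R$-neighbor of $y_4$. (If you want the final contradiction in the form of a violating triple rather than ``no common apex,'' just choose the pencil edge of $y_3$ whose $X$-endpoint differs from the $X_R$-neighbor of $y_4$; one of the two always does.) The paper's route is shorter and entirely local, producing the three triples directly; yours is longer but makes the mechanism --- a forced fan apex that is subsequently contradicted --- explicit, and both proofs use the five neighbors for essentially the same reason: a median with two witnesses on each side so that a single reflection closes the remaining case.
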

\begin{proof}
    Let $\mathcal D$ be an arbitrary 2-layer fan-planar drawing of $G$.
    Let $u_1, \dots, u_5$ be neighbors of $v$ with degree at least~3 that appear in this order in $\mathcal D$.
    We assume without loss of generality that $u_3$ has a neighbor $w \neq v$ with $w <_X v$.
    Since the degree of $u_2$ is at least~3, it has a neighbor $w' \notin \{v, w\}$.
    If $w'$ appears to the right of $v$, the edge $\{u_2, w\}$ crosses both $\{u_3, w\}$ and $\{u_4, v\}$ (\Cref{fig:deg-3}~(a)).
    If $w'$ appears to the left of $w$, the edge $\{u_1, v\}$ crosses both $\{u_2, w'\}$ and $\{u_3, w\}$ (\Cref{fig:deg-3}~(b)).
    If $w'$ appears between $w$ and $v$, the edge $\{u_2, w'\}$ crosses both $\{u_1, v\}$ and $\{u_3, w\}$ (\Cref{fig:deg-3}~(c)).
    \begin{figure}
        \centering
        \includegraphics[width=0.7\linewidth,page=3]{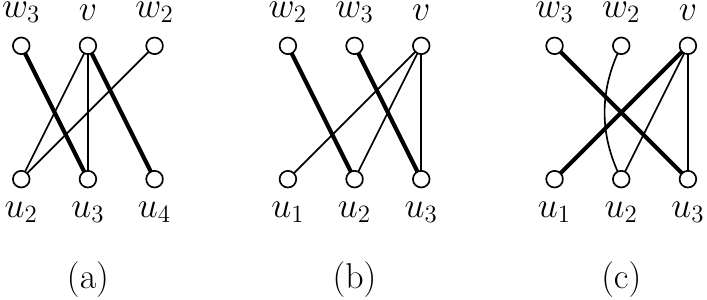}
        \caption{The figures indicate that $\mathcal D$ contains violating pairs, where bold edges cross a single edge.}
        \label{fig:deg-3}
    \end{figure}
    In all cases, $\mathcal D$ is not fan-planar.
\end{proof}

\begin{lemma}\label{lem:deg-2-matching}
    Suppose that a vertex $v$ has at least five degree-2 neighbors $u_1, \dots, u_5$ such that $G - v$ has a matching saturating $\{u_1, \dots, u_5\}$.
    Then, $G$ has no 2-layer fan-planar drawing.
\end{lemma}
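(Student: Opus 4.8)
The plan is to mimic the template of \Cref{lem:deg-3}: fix an arbitrary 2-layer fan-planar drawing $\mathcal D$ and derive a violating triple, contradicting fan-planarity. First I would unpack the matching hypothesis. Assuming $v \in X$ (the other case is symmetric by swapping the two layers), each $u_i \in Y$ has degree exactly $2$, so its neighbors are $v$ and a single other vertex $w_i \in X$. In $G - v$ the vertex $u_i$ has the \emph{unique} neighbor $w_i$, so any matching saturating $\{u_1, \dots, u_5\}$ must consist precisely of the edges $\{u_i, w_i\}$; the fact that these form a matching is then equivalent to saying that $w_1, \dots, w_5$ are pairwise distinct (and each is distinct from $v$, since $v \notin G-v$). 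Thus the hypothesis really supplies five length-$2$ paths joining $v$ to pairwise distinct vertices $w_1, \dots, w_5 \in X$ through the distinct degree-$2$ vertices $u_1, \dots, u_5 \in Y$.

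The key structural observation concerns the \emph{fan edges} $f_j := \{v, u_j\}$ and the \emph{matching edges} $m_i := \{u_i, w_i\}$. Since the $u_i$ and the $w_i$ are all distinct and lie on opposite layers, any two distinct matching edges are disjoint, and $m_i$ is disjoint from $f_j$ whenever $j \ne i$. Consequently, if a single fan edge $f_j$ is crossed by two distinct matching edges $m_i, m_{i'}$ (necessarily with $i, i' \ne j$), then $\{f_j, m_i, m_{i'}\}$ is a matching of size three with $m_i, m_{i'} \in \cross_{\mathcal D}(f_j)$, i.e.\ a violating triple. So it suffices to exhibit a single fan edge that is crossed by at least two matching edges.

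To produce such an edge I would read off, from the crossing rule, exactly which fan edges each $m_i$ crosses. Relabel so that $u_1 <_Y \cdots <_Y u_5$. A direct check of the definition gives: if $w_i$ is to the right of $v$, then $m_i$ crosses precisely the $f_j$ with $j > i$; if $w_i$ is to the left of $v$, then $m_i$ crosses precisely the $f_j$ with $j < i$. In particular the rightmost fan edge $f_5$ is crossed by $m_i$ exactly when $w_i$ is to the right of $v$ and $i \ne 5$, and symmetrically $f_1$ is crossed by $m_i$ exactly when $w_i$ is to the left of $v$ and $i \ne 1$. Now apply pigeonhole: each of the five distinct vertices $w_1, \dots, w_5$ lies strictly to one side of $v$, so at least three lie on a common side. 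If at least three lie to the right, at least two of them have index different from $5$, so two matching edges cross $f_5$; if at least three lie to the left, the symmetric argument produces two matching edges crossing $f_1$. Either way we obtain a violating triple, so $\mathcal D$ is not fan-planar, and since $\mathcal D$ was arbitrary, $G$ admits no 2-layer fan-planar drawing.

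I expect the only delicate point to be the crossing computation in the third paragraph: getting the directions of the inequalities right and confirming that the selected pair of matching edges is genuinely disjoint from the chosen fan edge (this is exactly where the distinctness of the $w_i$ extracted from the matching hypothesis is used). Once the internal claim ``each fan edge is crossed by at most one matching edge in a fan-planar drawing'' is established, the remainder is a one-line pigeonhole on the five second-neighbors $w_1, \dots, w_5$.
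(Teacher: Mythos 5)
Your proof is correct and follows essentially the same strategy as the paper's: both derive a violating triple from the star edges $\{v,u_i\}$ and the matching edges $\{u_i,w_i\}$ by examining on which side of $v$ each $w_i$ falls. You organize this as an explicit crossing rule plus a pigeonhole step (three of the five $w_i$ on one side of $v$ force two matching edges to cross a single extreme fan edge), whereas the paper reflects the drawing so that $w_3$ lies to the left of $v$ and then runs a three-way case analysis on the position of $w_2$; the violating triples produced are of the same kind.
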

\begin{proof}
    Let $\mathcal D$ be an arbitrary 2-layer fan-planar drawing of $G$.
    Assume that the vertices $u_1, \dots, u_5$ appear in this order in $\mathcal D$.
    For each $1 \le i \le 5$, let $w_i$ be the vertex that is matched to $u_i$.
    We assume without loss of generality that $w_3$ is to the left of $v$ in $\mathcal D$.
    If $w_2$ appears to the right of $v$, the edge $\{u_2, w_2\}$ crosses both $\{u_4, v\}$ and $\{u_3, w_3\}$ (\Cref{fig:matching}~(a)).
    If $w_2$ appears to the left of $w_3$, the edge $\{u_1, v\}$ crosses both $\{u_2, w_2\}$ and $\{u_3, w_3\}$ (\Cref{fig:matching}~(b)).
    If $w_2$ appears between $w_3$ and $v$, the edge $\{u_2, w_2\}$ crosses both $\{u_1, v\}$ and $\{u_3, w_3\}$ (\Cref{fig:matching}~(c)).
    \begin{figure}
        \centering
        \includegraphics[width=0.5\linewidth,page=1]{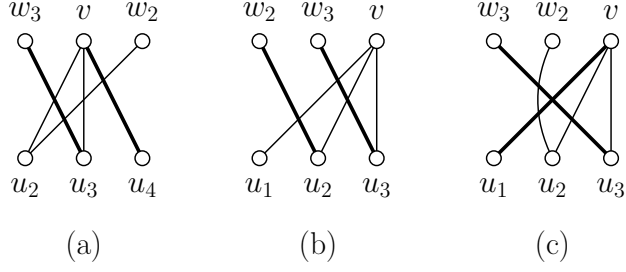}
        \caption{The cases in the proof of \cref{lem:deg-2-matching}.}
        \label{fig:matching}
    \end{figure}
    In all cases, $\mathcal D$ is not fan-planar.
\end{proof}

\begin{lemma}\label{lem:deg-2}
    Let $G$ be a bipartite graph without isolated vertices.
    Suppose that a vertex $v$ has three degree-2 neighbors $u_1, \dots, u_3$ with $N(u_i) = \{v, w\}$ for $1 \le i \le 3$.
    Then, $G$ has a 2-layer fan-planar graph if and only if $G - u_3$ does.
\end{lemma}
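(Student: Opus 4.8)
The forward implication is immediate: deleting $u_3$ from a 2-layer fan-planar drawing of $G$ simply removes the two edges at $u_3$, which can only destroy crossings, so the restriction stays fan-planar. The substance is the converse, for which the plan is to start from a fan-planar drawing $\mathcal D$ of $G - u_3$ and reinsert $u_3$ right next to $u_1$, in the spirit of \cref{lem:deg-1}, arguing that no violating triple appears. After relabeling I may assume $u_1 <_X u_2$ and $v <_Y w$ in $\mathcal D$; I would then place $u_3$ immediately to the left of $u_1$ (with no vertex between them) and call the result $\mathcal D'$. The essential feature is that two twins $u_1, u_2$ survive in $G - u_3$ — this is exactly why the hypothesis demands three — and that the orderings force $\{u_1, w\}$ and $\{u_2, v\}$ to cross in $\mathcal D$, which pins the crossing pattern around $v$ and $w$.

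The crux is the following claim about $\mathcal D$: every edge crossing $\{u_1, v\}$ or $\{u_2, v\}$ is incident to $w$, and symmetrically every edge crossing $\{u_1, w\}$ or $\{u_2, w\}$ is incident to $v$. I would prove the ``$w$-incident'' statements first. Since $\{u_2, v\} \in \cross_{\mathcal D}(\{u_1, w\})$, fan-planarity forces every edge of $\cross_{\mathcal D}(\{u_1, w\})$ to meet $\{u_2, v\}$; the meeting vertex cannot be $u_2$ (the only other edge at $u_2$, namely $\{u_2, w\}$, shares $w$ with $\{u_1, w\}$ and so cannot cross it), hence it is $v$. The same crossing gives symmetrically that every edge of $\cross_{\mathcal D}(\{u_2, v\})$ is incident to $w$. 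The statement for $\{u_1, v\}$ then follows by locating the endpoints of a crossing edge $g$: if the top endpoint of $g$ lies to the right of $u_1$, then $g$ also crosses $\{u_1, w\}$ and is therefore incident to $v$, contradicting that it crosses $\{u_1, v\}$; so the top endpoint lies to the left of $u_1$, which forces $g$ to cross $\{u_2, v\}$ as well, whence $g$ is incident to $w$. In particular no edge crosses both a $v$-edge and a $w$-edge at the same $u_i$, since an edge cannot be incident to both $v$ and $w$.

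Given the claim, checking that $\mathcal D'$ has no violating triple is bookkeeping over crossing sets. Because $u_3$ sits immediately left of $u_1$, an edge of $G - u_3$ crosses $\{u_3, v\}$ exactly when it crosses $\{u_1, v\}$, and crosses $\{u_3, w\}$ exactly when it crosses $\{u_1, w\}$, apart from the single extra crossing between $\{u_3, w\}$ and $\{u_1, v\}$ that the placement creates. For the two new edges this yields $\cross_{\mathcal D'}(\{u_3, v\}) = \cross_{\mathcal D}(\{u_1, v\})$, all incident to $w$, and $\cross_{\mathcal D'}(\{u_3, w\}) = \cross_{\mathcal D}(\{u_1, w\}) \cup \{\{u_1, v\}\}$, all incident to $v$; both retain a common endpoint. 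For an old edge $e$, the claim guarantees $e$ crosses at most one of $\{u_1, v\}, \{u_1, w\}$, so it acquires at most one edge at $u_3$: if $e$ crosses $\{u_1, v\}$ it gains only $\{u_3, v\}$, and since $\cross_{\mathcal D}(e)$ already contains $\{u_1, v\}$ its common endpoint is $v$, which $\{u_3, v\}$ shares; symmetrically if $e$ crosses $\{u_1, w\}$. The remaining growth occurs at $\{u_1, v\}$ itself, which gains $\{u_3, w\}$, but $\cross_{\mathcal D}(\{u_1, v\})$ is incident to $w$ by the claim, as is $\{u_3, w\}$. Hence every crossing set in $\mathcal D'$ keeps a common endpoint, so $\mathcal D'$ is fan-planar.

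I expect the main obstacle to be precisely the ``$v$-incident'' half of the structural claim. The naive reinsertion next to $u_1$ would fail if $\{u_1, v\}$ could be crossed by an edge avoiding $w$, and the only thing ruling this out is the second twin $u_2$: such an edge would give $\{u_2, v\}$ two crossing edges with no common endpoint, contradicting fan-planarity of $\mathcal D$. Getting the case analysis aligned with the fixed orderings $u_1 <_X u_2$ and $v <_Y w$ is the delicate part; the rest reduces to tracking which edges cross which.
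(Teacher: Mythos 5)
Your proof is correct, and the overall strategy matches the paper's: both directions reduce to reinserting $u_3$ into a fan-planar drawing $\mathcal D$ of $G - u_3$ and exploiting the two surviving twins $u_1, u_2$ to control what can cross the new edges. The supporting analysis differs, though. The paper derives \emph{positional} structure from $\mathcal D$ --- $v$ and $w$ must be consecutive on their layer, no vertex between $u_1$ and $u_2$ has a neighbor outside $\{v,w\}$, and no edge spans the configuration --- and then inserts $u_3$ between $u_1$ and $u_2$, concluding that every edge at $v$ crosses only edges at $w$ and vice versa. You instead prove an \emph{incidence} characterization of the crossing sets of the four twin edges (everything crossing a $v$-edge of a twin meets $w$, and symmetrically), starting from the forced crossing of $\{u_1,w\}$ with $\{u_2,v\}$, and insert $u_3$ adjacent to $u_1$; the fan-planarity of $\mathcal D'$ then follows by tracking exactly which crossing sets grow. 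A pleasant byproduct of your route is that you never use the no-isolated-vertices hypothesis, which the paper needs to produce a witness edge at a vertex sitting between $v$ and $w$. One step is stated slightly too quickly: for an old edge $e$ crossing $\{u_1,v\}$ you assert that the common endpoint of $\cross_{\mathcal D}(e)$ ``is $v$,'' whereas a priori it could be $u_1$; this is repaired precisely by your own parenthetical observation that $e$ cannot cross both $\{u_1,v\}$ and $\{u_1,w\}$ (it would then be incident to both $w$ and $v$), so any second edge of $\cross_{\mathcal D}(e)$ through $u_1$ is impossible and $v$ serves as a common endpoint. With that spelled out, the argument is complete.
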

\begin{proof}
    It suffices to show that $G$ has a 2-layer fan-planar drawing, assuming that $G - u_3$ does.
    Let $D$ be a 2-layer fan-planar drawing of $G - u_3$.
    Without loss of generality, we assume that $v, w \in X$, $w <_X v$, and $u_1 <_Y u_2$ in $\mathcal D$.
    We first observe that $w$ and $v$ appear consecutively in $\mathcal D$.
    To see this, suppose that there is a vertex $w' \notin \{v, w\}$ such that $w <_X w' <_X v$.
    As $G$ has no isolated vertices and $N(u_1) = N(u_2) = \{v, w\}$, there is an edge $\{u', w'\}$ with $u' \notin \{u_1, u_2\}$.
    This edge violates the fan-planarity of $\mathcal D$ as in \Cref{fig:K_23}~(a)~and~(b).
    \begin{figure}
        \centering
        \includegraphics[width=0.65\linewidth,page=2]{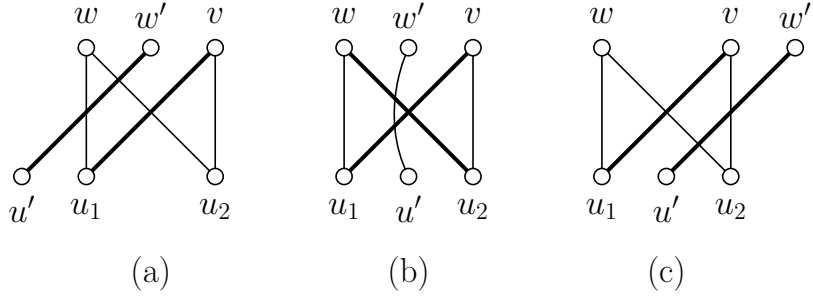}
        \caption{The cases in the proof of \cref{lem:deg-2}.}
        \label{fig:K_23}
    \end{figure}
    We next observe that no vertex between $u_1$ and $u_2$ has a neighbor other than $w$ and $v$.
    To see this, suppose that there is a vertex $u' \notin \{u_1, u_2\}$ that has a neighbor $w' \notin \{v, w\}$ and satisfies $u_1 <_Y u' <_Y u_2$.
    Again, the edge $\{u', w'\}$ violates the fan-planarity of $\mathcal D$ (\Cref{fig:K_23}~(b)~and~(c)).
    \Cref{fig:K_23}~(a)~and~(c) also indicate that there is no edge $\{u', w'\}$ with $u' <_Y u_1$ and $v <_X w'$ or $u_2 <_Y u'$ and $w' <_X w$ as $\{\{u_2, w\}, \{u_1, v\}, \{u', w'\}\}$ or $\{\{u_1, v\}, \{u_2, w\}, \{u', w'\}\}$ is a violating triple.

    We then extend the drawing $\mathcal D$ by inserting $u_3$ in an arbitrary position between $u_1$ and $u_2$.
    Note that there can be other vertices that have neighbors in $\{v, w\}$ between them. 
    The obtained drawing $\mathcal D'$ of $G$ is still 2-layer fan-planar since each edge incident to $w$ has crossings with edges incident to $v$ only and each edge incident to $v$ has crossings with edges incident to $w$ only.
\end{proof}

Now, we apply reduction rules based on \Cref{lem:deg-1,lem:deg-2}.
We assume that $G$ has no isolated vertices, as we can simply remove it without changing its fan-planarity.
Suppose that a vertex $v$ has at least two degree-1 neighbors.
We remove all but one of them from $G$.
The safeness of the first reduction rule is shown in \Cref{lem:deg-1}.
Suppose that $v$ has at least three degree-2 neighbors $u_1, \dots, u_t$ with $N(u_i) = \{v, w\}$ for some $w$.
Then, we remove $u_i$ for all $3 \le i \le t$ from $G$.
The safeness of the second reduction rule is shown \Cref{lem:deg-2}.
We say that $G$ is \emph{reduced} if none of these reduction rules can be applied to $G$.

\begin{lemma}\label{lem:degree-bound}
    Suppose that $G$ is reduced.
    If $G$ admits a 2-layer fan-planar drawing, then the maximum degree of $G$ is at most~13.
\end{lemma}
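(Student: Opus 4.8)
The plan is to fix an arbitrary vertex $v$ and bound $|N(v)|$ by partitioning $N(v)$ according to degree: neighbors of degree~$1$, neighbors of degree~$2$, and neighbors of degree at least~$3$. I would bound each part separately using the two reduction rules together with \cref{lem:deg-3,lem:deg-2-matching}, and then sum the three bounds. Throughout, I assume $G$ is reduced and fix a 2-layer fan-planar drawing, which exists by hypothesis.

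First, the easy parts. Since $G$ is reduced, the first reduction rule (justified by \cref{lem:deg-1}) guarantees that $v$ has at most one neighbor of degree~$1$. For neighbors of degree at least~$3$: if $v$ had five or more such neighbors, then by \cref{lem:deg-3} the graph $G$ would admit no 2-layer fan-planar drawing, contradicting our assumption. Hence $v$ has at most four neighbors of degree at least~$3$.

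The main work, and the step I expect to be the crux, is bounding the degree-$2$ neighbors. Each degree-$2$ neighbor $u$ satisfies $N(u) = \{v, w_u\}$ for a unique vertex $w_u \neq v$, so in $G - v$ the vertex $u$ has degree exactly~$1$ with sole neighbor $w_u$. I would group the degree-$2$ neighbors by the value of $w_u$. Because $G$ is reduced, the second reduction rule (justified by \cref{lem:deg-2}) forbids three degree-$2$ neighbors sharing the same second endpoint, so each group has size at most~$2$. The key observation is that any matching of $G - v$ saturating a set $S$ of degree-$2$ neighbors must match each $u \in S$ to its unique available neighbor $w_u$, forcing the elements of $S$ to have pairwise distinct second endpoints. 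Consequently, if there were five or more distinct second endpoints among the degree-$2$ neighbors, one could select one degree-$2$ neighbor per distinct endpoint to obtain five degree-$2$ neighbors saturated by a matching of $G - v$, and \cref{lem:deg-2-matching} would rule out any fan-planar drawing --- a contradiction. Thus there are at most four distinct second endpoints, each contributing at most two degree-$2$ neighbors, for a total of at most eight.

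Combining the three bounds, $v$ has at most $1 + 4 + 8 = 13$ neighbors, which establishes the claim. The only real subtlety is the matching argument in the previous paragraph: one must verify that saturating the degree-$2$ neighbors genuinely forces distinct second endpoints, and this is exactly where the degree-$1$ structure of these neighbors inside $G - v$ is used.
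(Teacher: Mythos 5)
Your proposal is correct and follows essentially the same route as the paper's proof: at most one degree-1 neighbor by the first reduction rule, at most four neighbors of degree at least~3 by \cref{lem:deg-3}, and at most $4 \cdot 2 = 8$ degree-2 neighbors by combining the second reduction rule with \cref{lem:deg-2-matching}. Your explicit justification that distinct second endpoints yield a matching saturating the chosen degree-2 neighbors is exactly the (implicit) argument the paper relies on when invoking \cref{lem:deg-2-matching}.
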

\begin{proof}
    Let $v$ be an arbitrary vertex in $G$.
    Suppose that $v$ belongs to $X$.
    Since $G$ is reduced, $v$ has at most one degree-1 neighbor.
    Moreover, $v$ has at most four neighbors of degree at least~3 due to \Cref{lem:deg-3}.
    For $w \in X \setminus \{v\}$, let $Y_2(w) = \{u \in Y: N(u) = \{v, w\}\}$.
    Then, $|Y_2(w)| \le 2$.
    Moreover, there are at most four vertices $w$ in $X \setminus \{v\}$ with $Y_2(w) \neq \emptyset$ due to \Cref{lem:deg-2-matching}.
    Thus, the degree of $v$ is at most $1 + 4  + 2\cdot 4 = 13$.
\end{proof}

\section{Algorithm}

Let us now turn to our polynomial-time algorithm.
Our algorithm builds upon the dynamic programming algorithm for 2-layer $k$-planarity given in \cite{KobayashiO025}, and we incorporate a condition for testing fan-planarity into this dynamic programming.

Let $\mathcal D$ be a 2-layer drawing of $G$.
For an integer $k \ge 0$, we say that $\mathcal D$ is \emph{$k$-planar} if each edge involves at most $k$ crossings.
The following observation is crucial for our algorithm.

\begin{observation}\label{obs:k-planar}
    Let $G$ be a bipartite graph with degree at most $d$.
    Then, every 2-layer fan-planar drawing of $G$ is $d$-planar.
\end{observation}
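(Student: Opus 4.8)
The plan is to fix an arbitrary 2-layer fan-planar drawing $\mathcal D$ of $G$, pick an arbitrary edge $e = \{x, y\} \in E$, and bound the number of edges crossing $e$ directly by a vertex degree. The key observation is that fan-planarity forces all of $\cross_{\mathcal D}(e)$ to be incident to a single vertex. If $|\cross_{\mathcal D}(e)| \ge 2$, the definition of fan-planarity supplies a common end vertex $z$ such that every edge in $\cross_{\mathcal D}(e)$ is incident to $z$. Since the edges of $\cross_{\mathcal D}(e)$ are pairwise distinct edges all sharing the endpoint $z$, they form a subset of $\delta(z)$, so $|\cross_{\mathcal D}(e)| \le \deg(z) \le d$. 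Because $e$ was arbitrary, every edge involves at most $d$ crossings, which is exactly the statement that $\mathcal D$ is $d$-planar.

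First I would dispose of the degenerate cases, where the common-vertex condition is vacuous. If $\cross_{\mathcal D}(e) = \emptyset$ then $e$ has $0 \le d$ crossings, and if $|\cross_{\mathcal D}(e)| = 1$ then $e$ has $1 \le d$ crossings (note $d \ge 1$ since $G$ has an edge); in both cases the bound holds trivially and no common vertex is needed. Thus the only substantive case is $|\cross_{\mathcal D}(e)| \ge 2$, handled above. I would also record the elementary fact that in a 2-layer drawing two edges sharing an endpoint never cross, so the common vertex $z$ satisfies $z \notin \{x, y\}$; this is not strictly needed for the counting bound but clarifies that the fan is genuinely disjoint from $e$.

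There is essentially no hard part here: the entire content is the single inequality $|\cross_{\mathcal D}(e)| \le \deg(z)$, obtained by identifying the crossing edges with distinct edges incident to the fan apex $z$ and then applying the degree bound. The only point requiring a moment of care is the bookkeeping of the degenerate subcases so that the appeal to a common end vertex is justified whenever it is invoked. Consequently I expect the proof to be a couple of lines, with the conclusion following immediately once the bound is established for an arbitrary edge $e$.
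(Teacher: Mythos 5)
Your proposal is correct and follows exactly the paper's argument: all edges crossing $e$ share a common end vertex $z$ by fan-planarity, so $|\cross_{\mathcal D}(e)| \le \deg(z) \le d$. The extra bookkeeping for the cases $|\cross_{\mathcal D}(e)| \le 1$ is fine but not needed, since the common-vertex condition is vacuously satisfiable there and the degree bound holds regardless.
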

\begin{proof}
    Let $\mathcal D$ be a 2-layer fan-planar drawing of $G$.
    For each edge $e$, the crossing edges in $\cross_{\mathcal D}(e)$ have a common end vertex $v$.
    Since the degree of $v$ is at most $d$, there are at most $d$ edges in $\cross_{\mathcal D}(e)$.
\end{proof}

Given this, by~\Cref{lem:degree-bound}, every 2-layer fan-planar drawing of a reduced bipartite graph $G$ is $13$-planar, which allows us to search a 2-layer fan-planar drawing among 2-layer $13$-planar drawings.
In the remaining part, we first give a sketch of a polynomial-time algorithm of~\cite{KobayashiO025} for deciding whether a bipartite graph admits a 2-layer $k$-planar drawing for fixed~$k$, and then explain how to modify the algorithm for our purpose.

Let $G = (X \cup Y, E)$ be a bipartite graph and let $k \geq 0$ be a fixed integer.
We can assume that $G$ is connected and has degree at most $2k + 2$ by applying a reduction rule~\cite[Lemma 2]{KobayashiO025}\footnote{In fact, we do not need this reduction rule as the maximum degree of our instance is already upper-bounded by a constant. However, we mention it here for the sake of completeness.}.
We can also assume $|X| > 2k$, as otherwise the number of vertices of $G$ is at most $2k + 2k(2k + 2)$, which is a constant.

Let $\mathcal D^*$ be a 2-layer $k$-planar drawing of $G$ and let $S \subseteq X$ with $|S| = 2k + 1$ such that the vertices in $S$ appear consecutively in $\mathcal D$.
Let $\mathcal D$ be the subdrawing of $\mathcal D^*$ induced by $N[S]$.
A crucial observation \cite[Lemma 8]{KobayashiO025} is that there is no path between $u \in X \setminus S$ appearing to the left of $S$ and $v \in X \setminus S$ appearing to the right of $S$ in $G -  N[S]$, that is, $N[S]$ separates the drawing $\mathcal D^*$ into the ``left part'' and the ``right part'' such that there is no path between these parts in $G - N[S]$.
Moreover, each edge contained in the ``left part'' does not cross any edge contained in the ``right part''~\cite[Lemma 4]{KobayashiO025} (see also \cref{lem:cross-free}).
These observations suggest the following dynamic programming algorithm.
For $S \subseteq X$ with $|S| = 2k + 1$, a 2-layer $k$-planar drawing $\mathcal D = (\sigma_X, \sigma_Y)$ of $G[N[S]]$, a function $\chi: \delta(S) \to \{0, 1, \dots, k\}$, and a set $\mathcal{C}$ of connected components of $G - N[S]$, $\draw(S, \mathcal D, \chi, \mathcal C)$ is a Boolean predicate that is true if and only if $G[L]$ admits a 2-layer $k$-planar drawing $\mathcal D_L = (\tau_X, \tau_Y)$ such that 
\begin{itemize}
    \item $\mathcal D$ is the subdrawing of $\mathcal D_L$ induced by $N[S]$,
    \item $\sigma_X$ is a suffix of $\tau_X$, and
    \item for every edge $e \in \delta(S)$, $\chi(e) = |\cross_{\mathcal D_L}(e)|$,
\end{itemize}
where $L = N[S] \cup (\bigcup_{C \in \mathcal{C}} C)$.
Such a drawing $\mathcal D_L$ is called an \emph{$(S, \mathcal D, \chi, \mathcal C)$-drawing} of $G[L]$.
Clearly, $G$ admits a 2-layer $k$-planar drawing if and only if there are some $S, \mathcal D, \chi, \mathcal{C}$ such that $\draw(S, \mathcal D, \chi, \mathcal{C})$ is true and $L = X \cup Y$.
To evaluate $\draw(S, \mathcal D, \chi, \mathcal C)$, we define an operator $\merge$.
For (not necessarily disjoint) $S, S' \subseteq X$, let $\mathcal D$ and $\mathcal D'$ be 2-layer drawings of $G[N[S]]$ and $G[N[S']]$, respectively.
Suppose that $\mathcal D$ and $\mathcal D'$ are \emph{consistent}: the vertices in $N[S] \cap N[S']$ appear in the same (relative) orderings on $X$ and on $Y$ in both $\mathcal D$ and $\mathcal D'$.
Then, $\mathcal D' \merge \mathcal D$ is defined as the 2-layer drawing of $G[N[S \cup S']]$ in which the subdrawings induced by $G[N[S]]$ and $G[N[S']]$ correspond to $\mathcal D$ and $\mathcal D'$, respectively, and for $v \in N[S]\setminus N[S']$ and $v' \in N[S']\setminus N[S]$ in the same layer, $v'$ appears to the left of $v$.
Note that $\mathcal D' \merge \mathcal D$ is determined uniquely.
Then $\draw(S, \mathcal D, \chi, \mathcal{C})$ can be computed by the following recurrence (except for the base cases with $L = N[S]$):
\begin{align*}
    \draw(S, \mathcal D, \chi, \mathcal{C}) = \bigvee_{S', \mathcal D', \chi', \mathcal{C}'} \draw(S', \mathcal D', \chi', \mathcal{C}'),
\end{align*}
where $(S', \mathcal D', \chi', \mathcal{C}')$ is taken over all possible combinations such that
\begin{enumerate}
    \item[c-1] $S'$ is obtained from $S$ by removing the last vertex $v^*$ of $\sigma_X$ and adding an arbitrary vertex $u^*$ in $(L \cap X) \setminus S$,
    \item[c-2] $\mathcal D'$ is a 2-layer drawing of $G[N[S']]$ such that $u^*$ is the leftmost vertex on $X$ in $\mathcal D'$ and it is consistent with $\mathcal D$,
    \item[c-3] $\chi'$ is a function $\delta(S') \to \{0, \dots, k\}$ satisfying $\chi'(e) = \chi(e) - |\cross_{\mathcal D}(e) \cap \delta(v^*)|$ for all $e \in \delta(S') \cap \delta(S)$, and
    \item[c-4] $\mathcal C'$ is the set of connected components of $G - N[S']$ that intersect with some component in $\mathcal C$.
\end{enumerate}
The correctness of this recurrence is shown in \cite{KobayashiO025}.
The number of possible tuples $(S, \mathcal D, \chi, \mathcal{C})$ and the total running time to evaluate the recurrence are both bounded by $2^{O(k^3)}n^{2k + O(1)}$~\cite{KobayashiO025}.

Now, we extend this recurrence and let us similarly define a Boolean predicate $\drawfan(S, \mathcal D, \chi, \mathcal{C})$ to be true if and only if the induced subgraph $G[L]$ admits an $(S, \mathcal D, \chi, \mathcal{C})$-drawing that is fan-planar.
We claim that this can be computed in the same manner.

\begin{lemma} \label{lem:rec}
For $S \subseteq X$ with $|S| = 2k + 1$, a 2-layer $k$-planar drawing $\mathcal D = (\sigma_X, \sigma_Y)$ of $G[N[S]]$ that is fan-planar, a function $\chi: \delta(S) \to \{0, 1, \dots, k\}$, and a non-empty set $\mathcal{C}$ of connected components of $G - N[S]$, it holds that
    \begin{align*}
        \drawfan(S, \mathcal D, \chi, \mathcal{C}) = 
            \bigvee_{S', \mathcal D', \chi', \mathcal{C}'} \drawfan(S', \mathcal D', \chi', \mathcal{C}'),
    \end{align*}
    where $(S', \mathcal D', \chi', \mathcal{C}')$ is taken over all possible combinations satisfying from c-1 to c-4 with an additional constraint that the drawing $\mathcal D'$ is fan-planar.
\end{lemma}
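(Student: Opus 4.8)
The plan is to prove the recurrence by showing the two inclusions separately, mirroring the correctness proof of the $\draw$ recurrence from \cite{KobayashiO025} but tracking fan-planarity throughout. The key structural fact I would lean on is that the separator $N[S]$ decomposes any valid drawing into a left part and a right part with no crossings between them; fan-planarity, being a condition quantified over individual edges and their crossing sets, interacts cleanly with this decomposition precisely because an edge's crossing set is confined to one side of the separator.

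First I would prove the easy ($\Leftarrow$) direction. Suppose some tuple $(S', \mathcal D', \chi', \mathcal C')$ satisfying c-1 through c-4 has $\drawfan(S', \mathcal D', \chi', \mathcal C') = \mathrm{true}$, witnessed by a fan-planar $(S', \mathcal D', \chi', \mathcal C')$-drawing $\mathcal D_{L'}$ of $G[L']$. Since $G[L]$ and $G[L']$ share the same vertex set (both equal $N[S'] \cup N[S] \cup \bigcup_{C \in \mathcal C} C$ up to the bookkeeping of c-1/c-4), I would argue that $\mathcal D_{L'}$ restricted/reinterpreted with respect to $S$ is itself an $(S, \mathcal D, \chi, \mathcal C)$-drawing: the defining conditions (that $\mathcal D$ is the induced subdrawing on $N[S]$, that $\sigma_X$ is a suffix of $\tau_X$, and that the crossing numbers match $\chi$) follow from the consistency in c-2 together with the crossing-count bookkeeping in c-3. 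The point is that moving $v^*$ from $S$ to $S'$ and adding $u^*$ only changes which edges lie in $\delta(S)$ versus $\delta(S')$, and c-3 is exactly the arithmetic that preserves the total crossing count of each boundary edge. Crucially, fan-planarity of $\mathcal D_{L'}$ is a property of the drawing itself and is inherited verbatim: the same drawing is fan-planar whether we regard it through the lens of $S$ or $S'$.

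For the harder ($\Rightarrow$) direction, I would take a fan-planar $(S, \mathcal D, \chi, \mathcal C)$-drawing $\mathcal D_L$ and produce a witnessing tuple. Let $v^*$ be the last vertex of $\sigma_X$; c-1 dictates removing it and adding the next vertex $u^*$ to the left in the relevant ordering. I would set $S'$, $\mathcal D'$ (the subdrawing induced on $N[S']$), $\chi'$ (defined by the c-3 formula), and $\mathcal C'$ (per c-4) as read off directly from $\mathcal D_L$, and then verify that $\mathcal D_L$ itself is an $(S', \mathcal D', \chi', \mathcal C')$-drawing and that the chosen tuple satisfies c-1 through c-4 with $\mathcal D'$ fan-planar. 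Verifying c-3 requires checking that each boundary edge's crossings are correctly re-partitioned when $v^*$ leaves the separator — this is the crossing-count argument from \cite{KobayashiO025}. The new obligation is that $\mathcal D'$ is fan-planar, which is immediate since $\mathcal D'$ is an induced subdrawing of the fan-planar $\mathcal D_L$: deleting edges and vertices from a fan-planar drawing cannot create a violating triple, as any triple in the subdrawing already appears in $\mathcal D_L$.

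The main obstacle I anticipate is not the fan-planarity condition — which, as noted, is inherited gracefully under taking subdrawings and under the separator decomposition — but rather ensuring that the \emph{additional} constraint ``$\mathcal D'$ is fan-planar'' does not over-restrict the disjunction and cause the ($\Leftarrow$) direction to fail. Concretely, I must confirm that whenever the global drawing $\mathcal D_L$ is fan-planar, the induced $\mathcal D'$ on $N[S']$ is genuinely fan-planar, so that this tuple is actually a valid term in the right-hand disjunction; conversely, that fan-planarity of the smaller certificate $\mathcal D_{L'}$ together with fan-planarity of the interface drawing $\mathcal D$ suffices to guarantee fan-planarity of the full drawing. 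This last implication is where I would invoke the no-crossing-between-parts property: a violating triple $\{e, f, f'\}$ in the merged drawing would have $f, f' \in \cross(e)$ forming a matching, and since all crossings respect the left/right separation, such a triple must lie entirely within one part, hence already be present in $\mathcal D_{L'}$ (a contradiction) — completing the argument that no new violations are introduced by the merge.
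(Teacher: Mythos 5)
Your forward direction and the first half of your backward direction match the paper: the induced subdrawing $\mathcal D'$ of a fan-planar $\mathcal D_L$ is fan-planar for free, and in the converse the merged drawing $\mathcal D_L = \mathcal D_{L'} \merge \mathcal D$ is the right object to analyze. However, your final step---``since all crossings respect the left/right separation, such a triple must lie entirely within one part, hence already be present in $\mathcal D_{L'}$''---is a non sequitur, and it skips exactly the case that the paper has to work to exclude. The separation property (\cref{lem:cross-free}) only says that an edge incident to $v^*$ never \emph{crosses} an edge incident to $(L \cap X)\setminus S$. It does not prevent a violating triple from straddling the two parts: take $e$ incident to a vertex $x(e) \in S \setminus \{v^*\}$, with $f \in \cross_{\mathcal D_L}(e)$ incident to $v^*$ (so the crossing $e$--$f$ lives in $\mathcal D$) and $f' \in \cross_{\mathcal D_L}(e)$ incident to some $x(f') \in (L\cap X)\setminus S$ (so the crossing $e$--$f'$ lives in $\mathcal D_{L'}$). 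Each individual crossing respects the separation, yet the triple $\{e,f,f'\}$ is contained in neither $\mathcal D$ nor $\mathcal D_{L'}$, so it cannot be dismissed as ``already present'' in either certificate.

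The paper closes this case with a positional argument that your proposal is missing: since $e$ crosses $f \in \delta(v^*)$, \cref{lem:cross-free} forces $x(e) \in S \setminus \{v^*\}$; since $\mathcal D$ is fan-planar, $f'$ cannot also be an edge of $G[N[S]]$ crossing $e$ together with $f$, so $x(f')$ lies outside $S$, to the left. Hence $x(f') <_X x(e) <_X v^*$, and because $e$ crosses both $f$ and $f'$ one gets $y(f) <_Y y(e) <_Y y(f')$, which means $f$ and $f'$ themselves cross---contradicting \cref{lem:cross-free}. Without some version of this argument (or another reason the straddling configuration is impossible), your proof of the $\Leftarrow$ direction is incomplete. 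A secondary, more minor slip: in your first paragraph you assert $G[L]$ and $G[L']$ have the same vertex set; they do not ($v^*$ and its private neighbors lie in $L \setminus L'$), which is precisely why the merge with $\mathcal D$ is needed rather than a mere reinterpretation of $\mathcal D_{L'}$.
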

Before proceeding with the proof of \Cref{lem:rec}, we state a key observation in $k$-planar drawings, which is vital in our proof.
\begin{lemma}[Lemma 4 in \cite{KobayashiO025}]\label{lem:cross-free}
    For any 2-layer $k$-planar $(S, \mathcal D, \chi, \mathcal C)$-drawing $\mathcal D_L$ of $G[L]$ with $L = N[S] \cup (\bigcup_{C \in \mathcal C}C)$, there is no crossing between edges incident to $v^*$ and edges incident to $(L \cap X) \setminus S$.
    In other words, $\cross_{\mathcal D_L}(e) = \cross_{\mathcal D}(e)$ for any $e \in \delta(v^*)$.
\end{lemma}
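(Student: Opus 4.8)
The plan is to argue by contradiction and exploit that the $2k+1$ vertices of $S$ form the rightmost block of $\mathcal D_L$. Suppose some edge $e = \{v^*, b\}$ incident to $v^*$ crosses an edge $f = \{u, a\}$ with $u \in (L \cap X) \setminus S$, where $a, b \in Y$; I will exhibit an edge carrying at least $k+1$ crossings in $\mathcal D_L$, contradicting its $k$-planarity. First I would record the geometry. Since $\sigma_X$ is a suffix of $\tau_X$ and $v^*$ is its last vertex, $S$ occupies the rightmost $2k+1$ positions of $\tau_X$ and $v^*$ is the rightmost $X$-vertex of $\mathcal D_L$. Hence every vertex of $(L \cap X) \setminus S$ lies to the left of all of $S$; in particular $u <_X s <_X v^*$ for each of the $2k$ vertices $s$ in $S \setminus \{v^*\}$, which I label $s_1, \dots, s_{2k}$. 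As $G$ is connected, every $s_i$ has positive degree, so I fix one neighbor $c_i \in Y$ of each.

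The heart of the argument is a counting step. Because $u <_X v^*$, the assumed crossing of $e$ and $f$ forces $b <_Y a$ by the crossing convention. For each $i$, applying the same convention to the pairs $(f, \{s_i, c_i\})$ and $(\{s_i, c_i\}, e)$ shows that $\{s_i, c_i\}$ crosses $f$ whenever $c_i <_Y a$ and crosses $e$ whenever $b <_Y c_i$. Since $b <_Y a$, these two conditions together cover every position of $c_i$ in $\sigma_Y$, so each $\{s_i, c_i\}$ crosses at least one of $e, f$. Letting $p$ and $q$ be the numbers of indices $i$ for which $\{s_i, c_i\}$ crosses $f$ and $e$ respectively, we obtain $p + q \ge 2k$. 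All these crossing edges are pairwise distinct and differ from both $e$ and $f$, since their $X$-endpoints $s_i$ are distinct and equal neither $u$ nor $v^*$. Hence, together with the mutual crossing of $e$ and $f$, the edge $f$ carries at least $p+1$ crossings and $e$ at least $q+1$; their sum is at least $2k+2$, so by pigeonhole one of $e, f$ is crossed more than $k$ times, the desired contradiction.

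Finally I would deduce the ``in other words'' restatement. For $e \in \delta(v^*)$, any edge of $\cross_{\mathcal D_L}(e)$ has its $X$-endpoint in $L \cap X = S \cup ((L \cap X) \setminus S)$. An endpoint in $(L \cap X) \setminus S$ is exactly what the contradiction above excludes, while an endpoint in $S$ yields an edge lying in $G[N[S]]$ whose crossing with $e$ is already present in the induced subdrawing $\mathcal D$; as induced subdrawings preserve the relative orders, the reverse inclusion $\cross_{\mathcal D}(e) \subseteq \cross_{\mathcal D_L}(e)$ is immediate, giving $\cross_{\mathcal D_L}(e) = \cross_{\mathcal D}(e)$. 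The step I expect to be the crux is the dichotomy ``$c_i <_Y a$ or $b <_Y c_i$'': it is precisely the inequality $b <_Y a$ supplied by the assumed crossing that makes these two conditions cover all of $\sigma_Y$, thereby converting the $2k$ interior vertices of $S$ into $2k$ distinct crossings distributed over only two edges. Everything else, namely the distinctness of the crossing partners and the pigeonhole bookkeeping, is routine, and the window size $|S| = 2k+1$ enters only to furnish the $2k$ interior vertices needed to overload one edge.
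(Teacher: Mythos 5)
Your proof is correct. Note that the paper itself does not prove this statement---it imports it verbatim as Lemma~4 of \cite{KobayashiO025}---so there is no in-paper argument to compare against; your pigeonhole argument is a valid, self-contained derivation: the assumed crossing forces $b <_Y a$, so each of the $2k$ interior vertices of $S$ supplies a distinct edge crossing $e$ or $f$, giving $|\cross_{\mathcal D_L}(e)| + |\cross_{\mathcal D_L}(f)| \ge 2k+2$ and contradicting $k$-planarity. The only hypothesis you use beyond the lemma's statement is that every vertex of $S \setminus \{v^*\}$ has a neighbor, which is covered by the paper's standing assumption (made just before the dynamic programming is set up) that $G$ is connected.
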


\begin{proof}[Proof of \Cref{lem:rec}]
The correctness of the recurrence almost follows from \cite{KobayashiO025}.
In their proof, it is shown that, from a 2-layer $k$-planar $(S, \mathcal D, \chi, \mathcal C)$-drawing of $G[L]$ with $L = N[S] \cup (\bigcup_{C \in \mathcal C}C)$, we can construct a 2-layer $k$-planar $(S', \mathcal D', \chi, \mathcal C')$-drawing satisfying all conditions c-1, c-2, c-3, and c-4, for some $S' \subseteq X$, $\mathcal D'$, $\chi'$, and $\mathcal C'$, and vice versa.
Thus, it suffices to show that the recurrence also holds with the additional fan-planarity requirement.

Suppose that $\drawfan(S, \mathcal D, \chi, \mathcal{C})$ is true, that is, there is a 2-layer $k$-planar $(S, \mathcal D, \chi, \mathcal C)$-drawing $\mathcal D_L$ of $G[L]$ that is fan-planar.
Since $\mathcal C$ is non-empty, there is a vertex $u^* \in X$ that is immediately to the left of the leftmost vertex of $S$ in $\mathcal D_L$.
Let $v^* \in X$ be the rightmost vertex in $\mathcal D_L$.
Then, $S' \coloneqq (S \cup \{u^*\}) \setminus \{v^*\}$ satisfies c-1 and the subdrawing $\mathcal D'$ of $\mathcal D$ induced by $G[N[S']]$ satisfies c-2 and is indeed fan-planar.
The remaining part of the proof follows from the proof in \cite{KobayashiO025}.
Thus, $\drawfan(S', \mathcal D_{S'}, \chi', \mathcal C')$ is true.

For the converse, suppose that $\drawfan(S', \mathcal D', \chi', \mathcal{C}')$ is true.
Let $\mathcal D_{L'}$ be a 2-layer $k$-planar $(S', \mathcal {D}_{S'}, \chi', \mathcal{C}')$ drawing of $G[L']$ with $L' = N[S'] \cup (\bigcup_{C \in \mathcal C'}C)$ that is fan-planar.
Since $S'$ satisfies c-1, we have $S = (S' \cup \{v^*\}) \setminus \{u^*\}$, where $u^*$ is the leftmost vertex on $X$ in $\mathcal D'$.
Then, the drawing $\mathcal D_L$ of $G[L]$ is defined as $\mathcal D_L = \mathcal D_{L'} \merge \mathcal D$.
The $k$-planarity of $\mathcal D_L$ is shown in \cite{KobayashiO025}, and hence it suffices to verify that it is fan-planar.
Suppose to the contrary that $\mathcal D_L$ has a violating triple $\{e, f, f'\}$ with $f, f' \in \cross_{\mathcal D_L}(e)$.
Since $\mathcal D_{L'}$ is fan-planar, at least one edge in the triple is incident to $v^*$.
If $e \in \delta(v^*)$, by~\Cref{lem:cross-free}, both $f$ and $f'$ are incident to $S$, contradicting the fact that $\mathcal D$ is fan-planar.
Suppose otherwise that $f \in \delta(v^*)$.
Due to the fan-planarity of $\mathcal D$, $f'$ is incident to a vertex $x(f') \in (L \cap X) \setminus S$.
As $e$ crosses $f$, $e$ is incident to a vertex $x(e) \in S \setminus \{v^*\}$ due to \Cref{lem:cross-free}.
This implies that $x(f') <_X x(e) <_X v^*$ in $\mathcal D_L$.
Let $y(e), y(f), y(f')$ be the other end vertices of $e, f, f'$ in $Y$, respectively. 
Since $e$ crosses both $f$ and $f'$, we have $y(f) <_Y y(e) <_Y y(f')$ in $\mathcal D_L$.
This contradicts the fact that $f$ and $f'$ do not cross in any $k$-planar drawing (\Cref{lem:cross-free}).
\end{proof}

We can determine if $G$ admits a 2-layer $k$-planar drawing that is also fan-planar by evaluating the recurrence of \cref{lem:rec}.
The running time bound can be obtained in a similar manner, since we can check additionally whether $\mathcal D$ is fan-planar in $k^{O(1)}$ time.
\begin{lemma}\label{lem:alg}
    For fixed $k$, there is a $2^{O(k^3)}n^{2k+O(1)}$-time algorithm that, given a bipartite graph $G$, decides whether $G$ admits a 2-layer $k$-planar drawing that is fan-planar.
\end{lemma}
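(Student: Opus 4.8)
The plan is to package the recurrence of \cref{lem:rec} into the dynamic program of \cite{KobayashiO025}, paying only an extra fan-planarity check at each step. First I would invoke the preprocessing discussed above to assume that $G$ is connected, has maximum degree at most $2k+2$, and satisfies $|X| > 2k$. Then I would set up a table whose entries are the values $\drawfan(S, \mathcal D, \chi, \mathcal C)$ ranging over all tuples with $S \subseteq X$, $|S| = 2k+1$, $\mathcal D$ a fan-planar 2-layer $k$-planar drawing of $G[N[S]]$, $\chi : \delta(S) \to \{0, \dots, k\}$, and $\mathcal C$ a set of connected components of $G - N[S]$, and fill it in order of increasing $|L|$, where $L = N[S] \cup \bigcup_{C \in \mathcal C} C$.

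Next I would separate the computation into base cases and recursive cases. For the base cases $\mathcal C = \emptyset$ (so $L = N[S]$), I would set $\drawfan(S, \mathcal D, \chi, \emptyset)$ to true exactly when $\mathcal D$ is fan-planar, $k$-planar, and satisfies $\chi(e) = |\cross_{\mathcal D}(e)|$ for all $e \in \delta(S)$; each condition is read off directly from $\mathcal D$. For non-empty $\mathcal C$ I would use \cref{lem:rec} verbatim, taking the disjunction over all $(S', \mathcal D', \chi', \mathcal C')$ meeting c-1 through c-4 with $\mathcal D'$ fan-planar. Since every such term refers to a strictly smaller set $L' \subsetneq L$, the evaluation order is well-founded. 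For correctness I would then argue, exactly as for $\draw$ in \cite{KobayashiO025}, that $G$ admits a fan-planar 2-layer $k$-planar drawing if and only if some reachable tuple with $L = X \cup Y$ has $\drawfan$ true; soundness of each unfolding step is already guaranteed by \cref{lem:rec}, and the base cases are checked directly.

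The last step is the running-time bound. The index set here is a subset of the one in \cite{KobayashiO025} (we additionally require $\mathcal D$ to be fan-planar), so the number of tuples and the number of transitions are still $2^{O(k^3)} n^{2k+O(1)}$. The only new per-transition cost is verifying fan-planarity of a candidate drawing $\mathcal D'$: since $N[S']$ has $O(k^2)$ vertices and hence $G[N[S']]$ has $O(k^2)$ edges, I would test fan-planarity by scanning the $O(k^4)$ pairs of edges, deciding in $O(1)$ time which pairs cross and whether, for each edge, all of its crossing edges share a common endpoint; this is $k^{O(1)}$ time. Multiplying through by this factor leaves the overall bound $2^{O(k^3)} n^{2k+O(1)}$ unchanged.

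I expect the only real care to lie in this final accounting --- confirming both that restricting to fan-planar $\mathcal D$ does not enlarge the table and that the extra fan-planarity test is cheap enough to be absorbed into the existing factors --- since all of the substantive drawing-combinatorial content is already carried by \cref{lem:rec} and the imported analysis of \cite{KobayashiO025}.
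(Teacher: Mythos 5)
Your proposal is correct and follows essentially the same route as the paper: the paper likewise evaluates the recurrence of \cref{lem:rec} by the dynamic program of \cite{KobayashiO025}, observing that the table only shrinks when restricted to fan-planar $\mathcal D$ and that the extra fan-planarity test on the $O(k^2)$ edges of $G[N[S]]$ costs only $k^{O(1)}$ time per entry. Your write-up merely spells out the base cases and the evaluation order, which the paper leaves implicit.
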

For a given graph $G$, it is clear that we can repeatedly apply the reduction rules and obtain a reduced graph in polynomial time.
Hence, by combining \Cref{lem:degree-bound,lem:alg} and \cref{obs:k-planar}, we obtain an algorithm for \Cref{thm:main}.

Finally, we remark that the degree bound in \cref{lem:degree-bound} has not been optimized for simplicity, which directly improves our running time.
However, \Cref{fig:lb} shows a certain limitation of our approach: there is a reduced graph having a vertex of degree~7, which is 2-layer fan-planar.
This indicates that our running time cannot be improved significantly by optimizing the bound of \cref{lem:degree-bound}.
\begin{figure}
    \centering
    \includegraphics[width=0.3\linewidth,page=4]{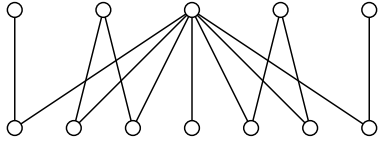}
    \caption{A reduced graph having a vertex of degree~7.}
    \label{fig:lb}
\end{figure}

\printbibliography

\end{document}